\newcommand{\ket}[1]{|#1\rangle}
\newcommand{\bra}[1]{\langle#1|}
\DeclareRobustCommand\openone{\leavevmode\hbox{\small1\normalsize\kern-.33em1}}
\newcommand{\tr}{\mathrm{tr}}
\newtheorem{theorem}{Theorem}
\newtheorem{lemma}{Lemma}
\newenvironment{proof}{\textit{Proof.---}}{\hspace*{\fill}$\square$\\[0.4em]}
\begin{document}

\begin{center}
{\LARGE\bf On entropy growth and the hardness of simulating time
evolution}

\vspace{0.5cm}

{\large N.\ Schuch, M.M.\ Wolf,
K.G.H.\ Vollbrecht, and
J.I.\ Cirac}

\vspace{0.3cm}

\small Max-Planck-Institut f\"ur Quantenoptik,\\ Hans-Kopfermann-Str.\ 1,
D-85748 Garching, Germany.

\vspace{1cm}
{\large Abstract}
\vspace{0.3cm}

\parbox{11.5cm}{
The simulation of quantum systems is a task for which quantum
computers are believed to give an exponential speedup as compared
to classical ones. While ground states of one-dimensional systems
can be efficiently approximated using Matrix Product States (MPS),
their time evolution can encode quantum computations, so that
simulating the latter should be hard classically. However, one
might believe that for systems with high enough symmetry, and thus
insufficient parameters to encode a quantum computation, efficient
classical simulation is possible. We discuss supporting evidence
to the contrary: We provide a rigorous proof of the observation
that a time independent local Hamiltonian can yield a linear increase of
the entropy when acting on a product state in a translational invariant
framework.  This criterion has to be met by any
classical simulation method, which
in particular
implies that every global approximation of
the evolution requires exponential resources for any MPS based method.}
\end{center}

\vspace{.5cm}

\section{Introduction}

One of the main challenges in the field of Quantum Computation is
to determine the boundaries between classical and quantum
computers and pinpoint the computational problems for which
quantum devices will be more powerful than classical ones. The
research on this topic goes in two directions: On the one hand,
several efficient quantum algorithms for mathematical problems
which are believed to be hard classically have been found, the
most prominent being Shor's celebrated algorithm for factoring and
discrete logarithms~\cite{shoralg}.
 On the other hand, the idea
to use quantum simulators to simulate quantum systems has gained
growing attention, as simulating quantum systems is arguably the
most natural thing to do with a quantum computer~\cite{lloyd}.
Moreover, in the near future quantum devices are far more likely
to be used for simulating quantum systems than for factoring large
numbers, both due to the much smaller number of qubits required
and the weaker demands on the control of the system. In
particular, there is no need to have a universal quantum computer
available~\cite{sorensen,jane,porras}.

In fact, while universal quantum computation requires a dynamics
which is either inhomogeneous in
time~\cite{vollbrecht1,raussendorf} or in
space~\cite{vollbrecht2,kay} (or both, as in the standard network
model), physically interesting dynamics to simulate can be
homogeneous in time, i.e., generated by a time-independent
Hamiltonian, and translational invariant.

Regarding static properties translational invariance appears to
considerably simplify the problem and enables us in many cases to
find efficient classical approximation algorithms. For instance
ground states of local one-di\-men\-sio\-nal Hamiltonians, in
particular in the presence of an energy gap, have an efficient
classical description~\cite{hastings1,hastings2} in terms of
so-called Matrix Product States (MPS)~\cite{mps-def}, which are
used as a variational ansatz in the the Density Matrix
Renormalization Group (DMRG) algorithm~\cite{white,schollwoeck}.

If static properties of a quantum system are efficiently
simulatable on a classical computer in the presence of
translational symmetry, why not also time evolution governed by a
time-independent local Hamiltonian?

The present paper is devoted to providing evidence for the
hardness of classically simulating quantum mechanical time
evolution for systems which are homogeneous in time \emph{and}
space. Hence, quantum computers (or simulators) may indeed yield
an exponential speed-up compared to their classical counterparts.

Clearly, we cannot hope for a complexity theoretic separation between
classical an quantum due to the lack of parameters needed to encode a
computational task. We will rather show first that a sufficiently fast
(e.g. linear) increase of the entanglement entropy makes it hard -- and
for MPS based approaches~\cite{vidal,osborne} impossible -- to even
approximate the state of the system efficiently. That such a linear
increase can occur even in the simplest one-dimensional systems was
observed in~\cite{cardy} based on arguments of conformal field theory
together with numerical diagonalization of chains up to 100 sites. The
main part of the present paper is devoted to a rigorous proof of this
observation.

\section{The hardness of simulating time evolution}

We show the hardness of simulating time evolution using MPS in two steps:
In a first stage,  we take a particular system initially in a
translational invariant product state of spin-$\tfrac12$ particles, and
let it evolve under a translational invariant nearest neighbor
Hamiltonian. For this system, we prove a linear lower bound to the entropy
of any contiguous block as a function of time, as observed by Calabrese
and Cardy in~\cite{cardy}.  This evolution has also been studied in
\cite{eisert}, where similar effects were observed.
We then combine this linear bound with the
continuity inequality for the von Neumann entropy which leads to a linear
lower bound on the block entropy of any approximation of the
state~\cite{mps-entropies}.  As the computational effort needed to deal
with a general MPS grows exponentially in the entropy, it would thus
increase exponentially with the time $t$ to be simulated.

In the following, logarithms are understood to the base $2$. In
particular, we define the von Neumann entropy $S(\rho)=-\tr[\rho\log\rho]$
with the base $2$ logarithm.   The natural logarithm will be denoted by
$\ln$.

\subsection{Setup and entropy scaling}

\begin{theorem}
\label{thm:entropy}
Consider a chain with an odd number $N$ of spins with periodic boundary
conditions, corresponding to a Hilbert space $(\mathbb C^2)^{\otimes N}$,
which at time $t=0$ is in the state
$\ket{\psi(0)}=\ket{\uparrow\dots\uparrow}\equiv\ket{1\dots1}$ and
evolves under the Ising Hamiltonian\footnote{
We choose $\sigma_x$, $\sigma_z$ the Pauli matrices with eigenvalues
$\pm1$.
}
\begin{equation}
\label{eq:spinham}
\mathcal H=-\frac12\sum_{j}\left[\sigma^x_j\sigma^x_{j+1}+\sigma^z_j\right]\ ,
\end{equation}
so that $\ket{\psi(t)}=e^{-i\mathcal Ht}\ket{\psi(0)}$. Define
\[S_L(t)=-\tr[\rho_L(t)\log\rho_L(t)]\]
with $\rho_L(t)=\tr_{L+1,\dots,N}\ket{\psi(t)}\bra{\psi(t)}$.
Then for $N\ge20$, $L\ge10$, $4\le t\le eL/4$, and $t\le N/5$,
\[
S_L(t)\ge \frac{4}{3\pi}t -\tfrac12\ln t -1\ .
\]
\end{theorem}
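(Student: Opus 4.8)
The plan is to exploit that the transverse-field Ising chain at its self-dual point is exactly solvable in terms of free fermions. First I would apply the Jordan-Wigner transformation, which maps $\mathcal H$ to a quadratic fermionic Hamiltonian and sends the fully polarized initial state $\ket{1\dots1}$ to a Gaussian (indeed vacuum-type) state. Since quadratic Hamiltonians preserve Gaussianity, $\ket{\psi(t)}$ stays Gaussian and the reduced state $\rho_L(t)$ of any contiguous block is a fermionic Gaussian state, completely determined by its real antisymmetric Majorana correlation matrix $\Gamma_L(t)$ of size $2L$. The von Neumann entropy then depends only on the eigenvalues $\pm i\lambda_j$ of $\Gamma_L(t)$ (with $\lambda_j\in[0,1]$), namely $S_L(t)=\sum_{j=1}^L H\!\left(\tfrac{1+\lambda_j}{2}\right)$ with $H(x)=-x\log x-(1-x)\log(1-x)$. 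This reduces the problem to controlling the spectrum of $\Gamma_L(t)$.

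Next I would compute $\Gamma_L(t)$ explicitly. By translational invariance I diagonalize the quadratic Hamiltonian through a Fourier transform followed by a mode-wise Bogoliubov rotation, so that each momentum mode $k$ evolves independently under a $2\times2$ generator. Evaluating the time-evolved two-point Majorana correlators yields, in the large-chain limit, Fourier integrals of the form $\int \tfrac{dk}{2\pi}\,e^{ik(j-l)}f(k,t)$ with $f$ built from $\cos$ and $\sin$ of $\epsilon_k t$; these are Bessel functions of $t$, so $\Gamma_L(t)$ is block-Toeplitz with Bessel entries depending only on $j-l$. The periodic-ring and parity subtleties of the Jordan-Wigner map affect only correlators winding around the ring, and the hypothesis $t\le N/5$ guarantees these have not traversed the chain, so the infinite-chain correlators may be used inside the block up to a negligible error.

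For the entropy lower bound I would avoid diagonalizing $\Gamma_L(t)$ and instead use the elementary inequality $H\!\left(\tfrac{1+\lambda}{2}\right)\ge 1-\lambda^2$, valid for $\lambda\in[0,1]$, giving
\[
S_L(t)\ \ge\ \sum_{j=1}^L\bigl(1-\lambda_j^2\bigr)\ =\ \tfrac12\,\tr\!\bigl[\openone+\Gamma_L(t)^2\bigr]\ .
\]
Because the global state is pure, $\openone+\Gamma(t)^2=0$ on the full chain, so $\openone+\Gamma_L(t)^2$ retains exactly the correlators linking the block to its complement; by antisymmetry this equals $\tfrac12\sum_{j\in\mathrm{block},\,l\notin\mathrm{block}}\Gamma_{jl}(t)^2$, a manifestly positive sum of squared boundary correlators that grows as entanglement spreads across the two cuts — the lattice counterpart of the ballistic quasiparticle-pair picture.

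The remaining and hardest step is to bound this double sum from below by $\tfrac{4}{3\pi}t-\tfrac12\ln t-1$. Inserting the explicit Bessel correlators turns it into a lattice sum of squared Bessel functions across each cut, whose leading behavior is linear in $t$ with coefficient fixed by the maximal group velocity, while the subleading $-\tfrac12\ln t$ reflects the Bessel crossover near its turning point. The main obstacle is to carry out these estimates rigorously and non-asymptotically: one must control the Bessel tails (where $J_n(t)$ decays super-exponentially for $n\gtrsim t$) to secure the constant $\tfrac{4}{3\pi}$ and the explicit error term. The window $4\le t\le eL/4$ is precisely the regime in which the block is large enough for this boundary sum to have reached its linear growth yet small enough that the two cuts have not saturated, with the factor $e$ entering through the Bessel turning-point estimate.
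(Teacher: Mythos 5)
Your reduction coincides with the paper's own route step for step: Jordan--Wigner to Majorana operators, Gaussianity of the evolved state, the entropy as a function of the eigenvalues $\pm i\lambda_j$ of the block correlation matrix, the parabola bound $h(\lambda)\ge 1-\lambda^2$, and the purity identity $A^2-CC^T=-\openone$ which converts the bound into the squared boundary correlators $\tfrac12\|C\|_2^2$, i.e.\ a sum of squared Bessel functions across the cuts. All of that is sound. The genuine gap is that your proposal stops exactly where the content of the theorem lies: the explicit bound $\tfrac{4}{3\pi}t-\tfrac12\ln t-1$ requires a rigorous, non-asymptotic lower bound on $\sum_{k=1}^{2L}k^3J_k^2(2t)/(2t)^2$, and you declare this ``the remaining and hardest step'' without offering any argument for it. In the paper this step is the entire appendix. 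Concretely: one differentiates $Q(z)=\sum_{n\ge1}n^3J_n^2(z)$ using the recurrences $2J_n'=J_{n-1}-J_{n+1}$ and $2nJ_n/z=J_{n-1}+J_{n+1}$, so that the sums telescope, $Q'(z)=\tfrac z2\bigl[J_0^2(z)+\sum_{n\ge1}4nJ_n^2(z)\bigr]$ and $f'(z)=2z\bigl[J_0^2(z)+J_1^2(z)\bigr]$; then a hand-proved estimate $J_0^2(z)+J_1^2(z)\ge\tfrac{2}{\pi z}-\tfrac1{z^2}$ (from Hankel's expansion with explicit remainder control) is integrated twice, producing exactly the $\tfrac{2}{3\pi}z^3-\tfrac12z^2\ln z$ leading behavior; finally, the truncation of the infinite sum at $k=2L$ is controlled by the tail bound $\sum_{n>K}n^3J_n^2(z)\le\tfrac{Kz^2}{2}(ez/2K)^{2K}$, obtained from $|J_n(z)|\le(z/2)^n/n!$ together with $k!\ge(k/e)^k$. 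Note that this last point also corrects your reading of the hypothesis $t\le eL/4$: the factor $e$ enters through Stirling's bound in the tail estimate, not through a turning-point analysis. Without some argument of this kind the constants in the statement are unsupported and the theorem is not proved.

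A secondary gap is the finite-$N$ control. You assert that for $t\le N/5$ the infinite-chain correlators may be used ``up to a negligible error'' because nothing has wound around the ring. The intuition is right and can be made rigorous (by Poisson summation the finite-$N$ correlators are image sums of the infinite-chain Bessel correlators, whose images decay super-exponentially), but as written it is an assertion, and it matters quantitatively: the error enters the entropy bound multiplied by $O(L^3/t)$ terms, so ``negligible'' must be made explicit. The paper does this with a quantitative theorem on the exponential accuracy of $N$-point Fourier sums of analytic $2\pi$-periodic functions, giving $|f_n-f_n^\infty|\le 30\,e^{-1.45N+4.5t}$ and a total contribution below $0.3$ to the entropy bound, which is precisely where the conditions $N\ge20$ and $t\le N/5$ are consumed.
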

We postpone the proof of the Theorem to
Sec.~\ref{sec:proof}.

Note that although the lower bound is independent on $L$ -- the intuition
behind being that entanglement arises at the boundaries --
the maximally attainable
entanglement is proportional to the length of the block due to the
constraint $t\le eL/4$.

\subsection{Exponential scaling of bond dimension}

Let us now see what the entropy scaling implies for an ansatz state
$\ket{\phi}\equiv\ket{\phi(t)}$
which we use to approximate
$\ket{\psi}\equiv\ket{\psi(t)}$.
Let $\rho_L\equiv\rho_L(t)=\tr_{B}\ket{\psi}\bra{\psi}$,
$\sigma_L\equiv\sigma_L(t)=\tr_B\ket{\phi}\bra{\phi}$
be the reduced states on a
block $A$ of length $L$ ($B$ is the complement of $A$).
The error we make in this approximation is
$$
\epsilon\ge\left\|\ket{\psi}\bra{\psi}-\ket{\phi}\bra\phi\right\|_1
\ge \|\rho_L-\sigma_L\|_1=:2T\ ,
$$
where we have used that the trace norm $\|\eta\|_1=\tr\,|\eta|$ is
contractive under the partial trace.
  We now exploit Fannes' continuity inequality for the von Neumann entropy
  in its improved version by
Audenaert~\cite{audenaert},
\begin{align*}
|S(\rho_L)-S(\sigma_L)|\le& T\log(2^L-1)+H(T,1-T)\\
    \le& T L +1\ ,
\end{align*}
where $H(T,1-T)=-T\log T-(1-T)\log(1-T)\le1$ is the binary entropy,
and thus
\begin{align*}
S(\sigma_L)&\ge S(\rho_L)- T L -1\\
    &\ge S(\rho_L)- \tfrac12\epsilon L -1\ .
\end{align*}
Using the lower bound on $S(\rho_L(t))\equiv S_L(t)$ of
Theorem~\ref{thm:entropy}, we have (under the corresponding conditions on
$N$, $L$, and $t$) that
\[
S(\sigma_L(t))\ge \frac{4}{3\pi}t-\tfrac12\epsilon L-\tfrac12\ln t -2\ .
\]
To get an optimal bound, we choose $L$ as small as possible for
the given $t$,
$L=4t/e$, which implies the new constraint $t\ge 5e/2$,
and obain
\begin{equation}
\label{eq:approx-entropy-lbnd}
S(\sigma_L(t))\ge \left(\frac{4}{3\pi}-\frac{2\epsilon}{e}\right)t
    -\tfrac12\ln t -2\ .
\end{equation}

In order to turn this into a lower bound on the size of an MPS
$\ket{\phi(t)}$ used to approximate the state at time $t$, note that the
entropy of any block is naturally upper bounded by the rank of the reduced
state $\rho_L$, which in turn is at
most $D^2$. Here, $D$ is the so-called \emph{bond dimension} of the MPS,
which is polynomially related to the complexity of the
MPS~\cite{schollwoeck}.  Then, (\ref{eq:approx-entropy-lbnd}) gives
\begin{equation}
\label{eq:logD-lin-lowerbnd}
\log D \ge \left(\frac{2}{3\pi}-\frac{\epsilon}{e}\right)t
    -\tfrac14\ln t -1\ ,
\end{equation}
i.e.\ the bond dimension -- and thus the resources for a straight
forward MPS simulation of the time evolution -- will scale
exponentially with time as soon as the accuracy
$\epsilon<\epsilon_0=2e/3\pi\approx0.577$.

Note that the same argument holds similarly for lattices in higher
spatial dimensions. As long as the entropy $S(\rho_A(t))$ of a
region $A$ has a linear leading term, there is an accuracy
$\epsilon_0$ beyond which $S(\sigma_A(t))$ has to grow linearly as
well. Hence, approaches based on projected entangled pair states
(PEPS)\cite{PEPS} or variants thereof are burdened with the same
problem.

\section{Proof of Theorem \ref{thm:entropy}
\label{sec:proof}}

We now prove Theorem~\ref{thm:entropy}. Let us briefly sketch the
main steps: First, we map the spin chain to a one-dimensional
system of free fermions, which can be solved exactly, and bound
the error made by going to the thermodymamic limit
$N\rightarrow\infty$.  The entropy of a contiguous subblock of
length $L$ is equal to the entropy of the corresponding fermionic
modes; using a parabola as a bound on the binary entropy we can
lower bound the block entropy by a quadratic function in the
correlation matrix. Combining this with the exact solution for the
thermodynamic limit, we obtain a lower bound on the block entropy
which is essentially of the form
\[
\sum_{n=1}^{2L}n^3 \frac{J_n^2(2t)}{(2t)^2}\ ,
\]
with $J_n$ Bessel functions of the first kind.  In the appendix we
bound this sum by deriving a lower bound for $L=\infty$ and an
upper bound on the error made by extending the sum. Taking all
this together then proves Theorem~\ref{thm:entropy}.

Before starting with the proof, let us recall the conditions
imposed, namely $N\ge20$, $L\ge10$, $4\le t\le eL/4$, and $t\le
N/5$. In addition we can assume $L\le N/2$ as the overall state is
pure.

\subsection{Diagonalization of the Hamiltonian}

We start by mapping the spin operators to fermionic
Majorana operators  via the Jordan-Wigner transformation
\begin{equation}
\label{eq:jordan-wigner}
c_n=\prod_{l=1}^{n-1}\sigma_z^{(l)}\sigma_x^{(n)}\ ,\qquad
c_{n+N}=\prod_{l=1}^{n-1}\sigma_z^{(l)}\sigma_y^{(n)}\ ,\qquad
n=0,\dots,N-1\ .
\end{equation}
We will refer to $c_0,\dots,c_{N-1}$ as the position and
to $c_N,\dots,c_{2N-1}$ as the momentum operators; each pair $c_k$,
$c_{N+k}$ defines a fermionic mode.
For the initial state $\ket{\psi(0)}=\ket{1\cdots 1}$ we have
\[
\ket{\psi(0)}=\prod_{k=1}^N\tfrac12(\openone+ic_kc_{k+N})=
    \lim_{\beta\rightarrow\infty} \frac{
    e^{+i\beta \sum_kc_kc_{k+N}}}{2\cosh(\beta)}\ ,
\]
and the Hamiltonian
\begin{equation}
\label{eq:ferm-ham}
\mathcal H=\tfrac i2\left[\sum_j(c_j-c_{j+1\mathrm{\;mod\;}N})c_{j+N}\right]
\end{equation}
is equal to the spin Hamiltonian (\ref{eq:spinham}) of
Theorem~\ref{thm:entropy} on the $-1$ eigenspace of
$\Pi=\sigma_z^{(1)}\cdots \sigma_z^{(N)}$, while for the $+1$ eigenspace,
the coupling between spins $N$ and $1$ has opposite sign.
Since $\Pi\ket{\psi(0)}=-\ket{\psi(0)}$ for odd $N$ and $[\mathcal
H,\Pi]=0$, (\ref{eq:ferm-ham}) indeed describes the evolution of the
Hamiltonian of Theorem~\ref{thm:entropy}. Note that the following results
equally hold for even $N$ where they describe an Ising system with a
flipped coupling across the boundary.

Since the initial state is a Gaussian and the Hamiltonian a quadratic
function in the Majorana operators, the system remains in a Gaussian state
throughout its evolution and is thus completely characterized by its
second moments, the antisymmetric \emph{correlation
matrix}~\cite{bravyi,michael:fermionic-gauss}
$$
\Gamma_{kl}=-\tfrac{i}{2}\tr\left(\rho[c_k,c_l]\right)
$$
which for the initial state is
\[
\Gamma_0=\left(\begin{array}{cc}0&-\openone\\
    \openone&0\end{array}\right)\ .
\]
It evolves as
\begin{equation}
\label{eq:timeevol}
\Gamma_t=e^{-Ht}\Gamma_0e^{Ht}\ ,
\end{equation}
where we defined the
Hamiltonian matrix $H$ by
\[
\mathcal H=\frac{i}4\sum_{k,l=0}^{2N-1} H_{kl}[c_k,c_l]\ .
\]
We now apply Fourier transformations
$ \mathcal F_{kl}=e^{2\pi i\,kl/N}/\sqrt{N}$
to both the position and the momentum subspace.
Thereby, $\Gamma_0$ and $H$ become block diagonal
with blocks
\begin{equation}
\label{eq:intro:fourier-h-and-gamma0}
\hat\Gamma_0(\phi_k)=\left(\begin{array}{cc}0&-1\\1&0
    \end{array}\right)
\end{equation}
and
\begin{equation}
\hat H(\phi_k)=\frac12\left(\begin{array}{cc}
    0&1-e^{i\phi_k}\\-1+e^{-i\phi_k} & 0 \end{array}\right)\ ,
\end{equation}
where $\phi_k=2\pi \frac kN$.
 The evolution (\ref{eq:timeevol}) becomes
\begin{equation}
\label{eq:intro:fourier-time-evol}
\hat\Gamma_t(\phi_k)= \exp\left[-\hat H(\phi_k)t\right]
    \hat\Gamma_0(\phi_k) \exp\left[\hat H(\phi_k)t\right]\ ,
\end{equation}
such that the CM at time $t$ (written in modewise ordering) is
\begin{equation}
\label{eq:gamma-t-blockform}
\Gamma_t=\left(\begin{array}{cccc}
    \gamma_0&\gamma_1&\cdots&\gamma_{N-1}\\
    \gamma_{-1}&\gamma_0&\ddots&\vdots\\
    \vdots&\ddots&\ddots&\vdots\\
    \gamma_{1-N}&\cdots&\cdots&\gamma_0
\end{array}\right)
\end{equation}
with
\begin{equation}
\label{eq:intro:gamma-n}
\gamma_n=
\frac1N\sum_{k=0}^{N-1} e^{i n \phi_k}
\hat\Gamma_t(\phi_k)\ .
\end{equation}
Solving (\ref{eq:intro:fourier-time-evol}), we find that
$$
\gamma_n=\left(\begin{array}{cc}f_n&-g_n\\g_{-n}&-f_n\end{array}\right)
$$
with
\begin{eqnarray}
\label{eq:def-fn-finite}
f_n&=&\frac 1N\sum_{k=0}^{N-1}\tfrac{i}{2}e^{in\phi_k}
    \left[e^{-i\phi_k/2}+e^{i\phi_k/2}\right]
    \sin(2t\sin\textstyle\frac{\phi_k}{2})\ ,\\
\label{eq:def-gn-finite}
g_n&=&\frac 1N\sum_{k=0}^{N-1}\tfrac12e^{in\phi_k}\left[
    1-e^{i\phi_k}+(1+e^{i\phi_k})\cos(2t\sin\textstyle\frac{\phi_k}{2})\right]
    \mathrm{d}\phi\ .
\end{eqnarray}
This is the well-known exact solution of the Ising model as found, e.g.,
in~\cite{sachdev,cardy}.

\subsection{The thermodynamic limit}

In order to facilitate the evaluation of $f_n$ and $g_n$,
we consider the limit
$N\rightarrow\infty$, i.e.\ we replace
\[
\frac1N\sum_{k=0}^{N-1}h(\phi_k)\quad\longrightarrow\quad
\frac{1}{2\pi}\int_{0}^{2\pi}h(\phi)\mathrm{d}\phi\ .
\]
We label the functions obtained thereby by
$f_n^\infty$ and $g_n^\infty$,
respectively.
The error induced by taking the limit
can be bounded using the following theorem.

\begin{theorem}
[cf.\ \cite{kress:numerical-analysis}]
Let $h:\mathbb R\rightarrow\mathbb R$ be analytic and $2\pi$-periodic. Then
there exists a strip $D=\mathbb R\times (-a,a)\subset \mathbb C$ with
$a>0$ such that $h$ can be extended to a bounded holomorphic and
$2\pi$-periodic function $h:D\rightarrow \mathbb C$. Let
$M=\sup_{D}|h|$. Then,
$$
\epsilon:=
    \left|\frac{1}{N}\sum_{j=0}^{N-1}
    h(\textstyle\frac{2\pi j}{N}\displaystyle)-
    \frac{1}{2\pi}\int_0^{2\pi}h(\phi)\mathrm{d}\phi\right|
    \le\frac{2M}{e^{a N}-1}\ .
$$
\end{theorem}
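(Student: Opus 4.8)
The plan is to exploit the Fourier representation of $h$, since for $2\pi$-periodic functions the equally spaced quadrature rule interacts particularly cleanly with the Fourier modes. I would write the Fourier series $h(\phi)=\sum_{m=-\infty}^{\infty}\hat h_m e^{im\phi}$ with coefficients $\hat h_m=\frac{1}{2\pi}\int_0^{2\pi}h(\phi)e^{-im\phi}\,d\phi$. The integral appearing in the statement is then simply $\hat h_0$, while for the discrete sum I would invoke the orthogonality relation $\frac1N\sum_{j=0}^{N-1}e^{im\,2\pi j/N}=1$ if $N\mid m$ and $0$ otherwise. Substituting the series and interchanging the finite sum over $j$ with the series over $m$ yields $\frac1N\sum_{j}h(\frac{2\pi j}{N})=\sum_{k=-\infty}^{\infty}\hat h_{kN}$, so that the quadrature error is exactly the aliasing contribution $\epsilon=\bigl|\sum_{k\neq0}\hat h_{kN}\bigr|$.

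The second ingredient is an exponential bound on the Fourier coefficients coming from analyticity. For $m>0$ I would shift the contour of integration from the real axis downward into the strip, to the line $\mathrm{Im}\,\phi=-b$ with $0<b<a$; by Cauchy's theorem the two vertical segments cancel because of $2\pi$-periodicity, and the remaining horizontal piece gives $\hat h_m=\frac{e^{-mb}}{2\pi}\int_0^{2\pi}h(\phi-ib)e^{-im\phi}\,d\phi$, whence $|\hat h_m|\le M e^{-mb}$. Letting $b\uparrow a$ gives $|\hat h_m|\le M e^{-am}$; shifting upward treats $m<0$ symmetrically, so that $|\hat h_m|\le M e^{-a|m|}$ for all $m$. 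This decay also retroactively justifies the absolute and uniform convergence used in the interchange above.

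Combining the two pieces, I would estimate
\[
\epsilon\le\sum_{k\neq0}|\hat h_{kN}|\le M\sum_{k\neq0}e^{-a|k|N}=2M\sum_{k=1}^{\infty}e^{-akN}=\frac{2M\,e^{-aN}}{1-e^{-aN}}=\frac{2M}{e^{aN}-1}\ ,
\]
which is precisely the claimed estimate.

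I expect the main obstacle to be the contour-shift step: one must verify that the two vertical boundary contributions genuinely cancel (this is exactly where $2\pi$-periodicity enters) and that passing to the limit $b\uparrow a$ is legitimate, given that $M$ is only a supremum over the \emph{open} strip rather than a maximum attained on its closure. Everything else — the aliasing identity and the summation of the resulting geometric series — is routine once the exponential coefficient decay is established.
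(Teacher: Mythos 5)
Your proof is correct. The paper does not prove this theorem itself --- it quotes it from the cited reference (Kress, \emph{Numerical Analysis}) --- and your argument (the aliasing identity $\frac1N\sum_j h(2\pi j/N)=\sum_k \hat h_{kN}$ for the trapezoidal rule on periodic functions, the contour-shift bound $|\hat h_m|\le M e^{-a|m|}$ on the Fourier coefficients, and summation of the resulting geometric series) is precisely the standard proof given there, including the correct handling of the limit $b\uparrow a$ and the cancellation of the vertical contour segments by periodicity.
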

It is straightforward to see that for any $a>0$, the addends in  both
(\ref{eq:def-fn-finite}) and (\ref{eq:def-gn-finite})
are holomorphic on $D=\mathbb R\times (-a,a)$
and bounded by
\[
M=\frac14e^{|n|a}(1+e^a)(3+\exp[t(e^{a/2}+e^{-a/2})])\ .
\]
Applying the above theorem for $a=2.9$, and taking into account that $n\le
N/2$, $t\ge4$, and $N\ge20$,
one obtains that
$|f_n-f_n^\infty|\le\epsilon$, $|g_n-g_n^\infty|\le\epsilon$, where
\begin{equation}
\label{eq:N-infty-error}
\epsilon\le30 e^{-1.45N+4.5t}\ .
\end{equation}
Taking the limit
in (\ref{eq:def-fn-finite}) and
(\ref{eq:def-gn-finite}), we find
\begin{align}
\nonumber
f_n^\infty
  &=\frac{-1}{2\pi}\!\!\int\limits_{0}^{\pi}\!\!
    \left[\sin((2n-1)\alpha)+\sin((2n+1)\alpha)\right]
    \sin(2t\sin\alpha)\,\mathrm{d}\alpha\\\nonumber
  &=-\frac{1}{2}\left[J_{2n-1}(2t)+J_{2n+1}(2t)\right]\\
\label{eq:fn-inf}
  &=-\frac{2nJ_{2n}(2t)}{2t}
\end{align}
and
\begin{align}\nonumber
g_n^\infty
  &=\frac{1}{2\pi}\int_0^{\pi}\left[\cos(2n\alpha)
    +\cos((2n+2)\alpha)\right]\cos(2t\sin\alpha)
    \,\mathrm{d}\alpha+I_n\\\nonumber
  &=\frac{1}{2}\left[J_{2n}(2t)+J_{2n+2}(2t)\right]+I_n\\
\label{eq:gn-inf}
  &=\frac{(2n+1)J_{2n+1}(2t)}{2t}+I_n\ ,
\end{align}
where $I_0=\tfrac12$, $I_{-1}=-\tfrac12$, and $I_n=0$ otherwise.
Here, $J_k$ are Bessel functions of the first kind, and
we have used the recurrence relation
$\frac12[J_{n-1}(z)+J_{n+1}(z)]=nJ_n(z)/z$.

\subsection{Lower bound for the block entropy}

The reduced state of the first $L$ sites is again Gaussian and its
correlation matrix is given by the $2L\times2L$ upper diagonal
subblock of $\Gamma_t$, which we label by $A_t$. Its entropy can
be computed from the normal mode decomposition~\cite{latorre}
\[
A_t=O\; \bigoplus_{l=1}^L \left(\begin{matrix} 0&\lambda_j\\-\lambda_j&0
    \end{matrix}\right)\; O^T\ ,
\]
where $O\in \mathrm{SO}(2L)$, as
\[
S_L(t)=\sum_{l=1}^L h(\lambda_j)\ ,
\]
with
\[
h(x)=-\frac{1+x}2\log \frac{1+x}2-\frac{1-x}2\log \frac{1-x}2\ .
\]
We lower bound $h(x)\ge 1-x^2$ as
in~\cite{fannes,michael:fermionic-gauss} and thus obtain the bound
\begin{align*}
S_L(t)&\ge \sum_{l=1}^L 1-\lambda_j^2\\
&= L+\tfrac12 \tr[A_t^2]\ .
\end{align*}
Writing
$$
\Gamma_t=\left(\begin{array}{cc}
    A&C\\-C^T&B\end{array}\right)
$$
in the $1,\dots,L$ and $L+1,\dots,N$ partitioning (thus $A\equiv A_t$
etc.) and using that the
purity of the
overall state
implies $\Gamma_t^2=-\openone$~\cite{michael:fermionic-gauss,bravyi},
we find
$A^2-CC^T=-\openone$ and thus
$$
S_L(t)\ge \textstyle\frac{1}{2}\tr[CC^T]=
\frac12\sum_{kl}C^2_{kl}\equiv\frac12\|C\|_2^2\ .
$$
We further bound the sum by only considering the lower left and the
upper right corner of $C$, i.e., all entries $\gamma_n$ in
(\ref{eq:gamma-t-blockform}) with
$1\le n\le L$ or $N-L\le n\le N-1 $ (equivalently, $-L\le n\le -1$,
since $\gamma_{N-n}=\gamma_{-n}$),
and obtain
\begin{equation}
S_L(t)\ge\frac{1}{2}\sum_{n=-L}^L |n|\|\gamma_n\|_2^2\ .
\label{eq:intro:sum-n-gamma-n}
\end{equation}
We have
\begin{align*}
\|\gamma_n\|_2^2
    &=2(f_n)^2+(g_n)^2+(g_{-n})^2\\
    &\ge\underbrace{2(f_n^\infty)^2+(g_n^\infty)^2+
    (g_{-n}^\infty)^2}_{=:A_n}-
    \underbrace{2\epsilon(2|f_n^\infty|+|g_n^\infty|+|g_{-n}^\infty|)}_{=:B_n}
\end{align*}
and with $|J_n|\le1/\sqrt{2}$ for $|n|\ge1$,  
\[
B_n\le\epsilon\left(4\sqrt2\frac{|n|}{t}+2|I_{-|n|}|\right)\
,\quad |n|\ge1\ .
\]
Thus, the total error made in (\ref{eq:intro:sum-n-gamma-n})
is bounded by
\[
\frac12\sum_{n=-L}^L |n|B_n\le\epsilon
\left(\frac{2\sqrt2}{3}\frac{L(L+1)(2L+1)}{t}+1\right)\le0.3
\]
where the latter follows using (\ref{eq:N-infty-error}) together with
$L\le N/2$, $4\le t\le N/5$, and $N\ge20$. Thus,
\begin{align}
\nonumber
S_L(t)&\ge\frac{1}{2}\sum_{n=-L}^L |n|A_n-0.3\\
\nonumber
    &\ge\sum_{k=1}^{2L} \frac{k^3 J_k^2(2t)}{(2t)^2}
    -\frac {J_1(2t)}{2t}+\frac14-0.3\\
    &\ge\sum_{k=1}^{2L} \frac{k^3 J_k^2(2t)}{(2t)^2}\,-\,0.14\ .
\label{eq:ent-lowerbnd-besselsum}
\end{align}
where we have used Eqs.~(\ref{eq:fn-inf}) and (\ref{eq:gn-inf}),
$J_{-n}(z)=(-1)^nJ_n(z)$, 
$|J_1(2t)|\le1/\sqrt2$, and $t\ge4$. We now bound the sum in
(\ref{eq:ent-lowerbnd-besselsum}) using
Lemma~\ref{lemma:bessel-sum-lowerbnd} and
\ref{lemma:besselsum-upperbnd} (see
Appendix
) and obtain
\begin{align*}
S_L(t)&\ge
    \frac{2}{3\pi}(2t)
    -\tfrac12\ln(2t)
    -\frac{4-\pi}{4\pi}
    -\frac{3\pi-4}{12\pi(2t)^2}
    -L\left(\frac{e\,t}{2L}\right)^{4L}
    -0.14\\
&\ge
    \frac{4}{3\pi}t
    -\tfrac12\ln t
    -1
\end{align*}
using $4\le t\le eL/4$ and $L\ge10$, which proves Theorem~\ref{thm:entropy}.
\hspace*{\fill}$\square$

Note that different constraints on $t$, $L$, and $N$ can be
chosen, resulting
in different corrections to the leading terms in the expansion of
$S_L(t)$.

\section{Discussion}
We gave a rigorous proof to the observation that the entanglement entropy
in a simple one-dimensional system increases essentially linearly.
Although the proof is tailored to a single exactly solvable model, we
think that this is a widespread property, thus being a requirement
which has to be met by every classical simulation method. Of course,
there always exist specific cases with tailored representations
(e.g. the quasi-free Fermions used here or Clifford circuits) 
close to which one can meet this requirement%
~\cite{freefermions,cliffordgates}    -- even MPS
allow
for a description of specific states with only polynomial effort in the
entanglement entropy (e.g., if
the matrices are tensor products\cite{Strings}). Yet, these
methods seem much more restricted in their applicability than MPS,
so that
 based on present
knowledge it seems unlikely that there is an efficient and generally
applicable classical simulation method. Note that similar results can be
found by considering any R\'enyi entropy with $\alpha>1$ instead of the
von Neumann entropy~\cite{mps-entropies}.  Also observe that in our
argument we required a good \emph{global} approximation of the state which
is a priori not necessary for obtaining good results for observables with
only local non-trivial support.

\section*{Acknowledgements}

We thank B.\ Horstmann for helpful discussions.  This work has
been supported by the EU (projects COVAQIAL and SCALA), the
cluster of excellence Munich Centre for Advanced Photonics (MAP),
the DFG-Forscher\-gruppe 635, and the Elite Network of Bavaria
project QCCC.

\section*{Appendix: Bounds on Bessel sums $\sum n^3 J_n^2(z)$
\label{sec:bessel-sum-bounds}}

\begin{lemma}
\label{lemma:bessel-sum-lowerbnd}
For $z\ge1$,
\begin{equation}
\label{eq:bessel-sum-lowerbnd}
Q(z):=\sum_{n=1}^\infty n^3J_n^2(z)
\ge \frac{2}{3\pi}z^3
-\tfrac12z^2\ln z
-\frac{4-\pi}{4\pi}z^2
-\frac{3\pi-4}{12\pi}\ .
\end{equation}
\end{lemma}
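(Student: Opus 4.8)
The plan is to identify the leading term by the semiclassical picture for Bessel functions and then promote it to a rigorous lower bound with explicit corrections. In the oscillatory range $n<z$ the Debye approximation gives, after averaging the squared cosine, $J_n^2(z)\approx 1/(\pi\sqrt{z^2-n^2})$, so one expects $Q(z)=\sum_{n\ge1}n^3J_n^2(z)\approx \frac1\pi\int_0^z n^3(z^2-n^2)^{-1/2}\,dn=\frac{2}{3\pi}z^3$, which already matches the announced coefficient. The whole task is then to show that $Q(z)$ does not drop below this by more than $\tfrac12 z^2\ln z+\frac{4-\pi}{4\pi}z^2+\frac{3\pi-4}{12\pi}$.

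To avoid estimating oscillations termwise, I would first convert the sum into one explicit integral. Since $J_{-n}=(-1)^nJ_n$ we have $Q(z)=\tfrac12\sum_{n\in\mathbb Z}|n|^3J_n^2(z)$, and the Gegenbauer addition theorem supplies the exact Fourier expansion $\sum_{n\in\mathbb Z}J_n^2(z)\cos(n\phi)=J_0(2z\sin\tfrac{\phi}{2})=:u(\phi)$, so that the $J_n^2(z)$ are precisely the Fourier coefficients of $u$. Setting $w:=-u''$, whose coefficients are $n^2J_n^2(z)$, and using the circle symbol $|n|$ (the Fej\'er-kernel identity $\sum_n|n|\widehat w_n=\frac{1}{2\pi}\int_0^{2\pi}\frac{w(0)-w(\psi)}{2\sin^2(\psi/2)}\,d\psi$, valid because $u$ is even so the integrand is regular at $0$), one obtains the closed form
\[
Q(z)=\frac{1}{4\pi}\int_0^\pi\frac{w(0)-w(2\theta)}{\sin^2\theta}\,d\theta ,
\]
with $w(0)=z^2/2$ and $w(2\theta)$ an explicit combination of $J_0(2z\sin\theta)$ and $J_1(2z\sin\theta)$ obtained by differentiating $J_0(2z\sin\theta)$ twice in $\theta$.

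The leading term now falls out cleanly: substituting $x=2z\sin\theta$ and replacing $\sqrt{4z^2-x^2}$ by $2z$ (valid for $x\ll z$) gives $\frac{z^3}{\pi}\int_0^\infty x^{-2}\big(\tfrac12-J_0(x)+J_1(x)/x\big)\,dx$, and this integral equals $\tfrac23$ (integrate by parts twice, using $\tfrac12-J_0+J_1/x=\tfrac12+J_0''$), reproducing $\tfrac{2}{3\pi}z^3$. The integrand is regular at $x=0$ because the $1/x^2$ singularity of the three terms cancels — this is exactly the turning point $n\approx z$ in the sum picture. A careful expansion shows the genuine remainder is only $O(z^2)$ (the extra measure factor contributes $\int_0^{\pi/2}d\phi/(1+\cos\phi)$-type terms, and the leftover $\tfrac14 z^2 J_0^2(z)$ is oscillatory of size $O(z)$); hence the $-\tfrac12 z^2\ln z$ in the statement is comfortable slack, and it suffices to bound the remainder crudely while tracking the $z^2$ and constant corrections to produce the stated $\frac{4-\pi}{4\pi}$ and $\frac{3\pi-4}{12\pi}$.

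I expect the main obstacle to be turning these estimates into a genuine \emph{lower} bound with the specified constants. The oscillatory pieces change sign, so they must be bounded in absolute value — most delicately in the crossover window $\theta=O(1/z)$ (argument $2z\sin\theta=O(1)$), where neither the small- nor the large-argument Bessel form alone is accurate and the cancellation regularizing $1/\sin^2\theta$ takes place; this is where the subtracted $z^2$ and $z^2\ln z$ terms are generated by the bounding method. Carrying explicit constants (rather than $O(\cdot)$) through the stationary-phase estimate near $\theta=\pi/2$ and through the turning-point region is the bookkeeping-heavy step, and it is there that $z\ge1$ is used to guarantee the remainder stays within the claimed margin. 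A more pedestrian alternative, likely closer to what is actually carried out in the appendix, is to skip the integral representation and instead bound $J_n^2(z)$ below by its Debye value region by region, comparing the resulting sums to integrals by monotonicity; the logarithm then enters naturally from the $\sum 1/(z-n)$ behaviour near the turning point.
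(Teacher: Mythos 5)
Your setup is sound as far as it goes: the addition-theorem identity $\sum_{n\in\mathbb Z}J_n^2(z)\cos(n\phi)=J_0(2z\sin\tfrac{\phi}{2})$, the Fej\'er-kernel representation of $\sum_n|n|\widehat w_n$, the value $w(0)=\sum_n n^2J_n^2(z)=z^2/2$, and the leading-order extraction $\frac{z^3}{\pi}\int_0^\infty x^{-2}\bigl(\tfrac12-J_0(x)+J_1(x)/x\bigr)\,dx=\frac{2}{3\pi}z^3$ are all correct. But what you have written is a plan, not a proof, and the gap sits exactly where you yourself locate it: nothing in the proposal rigorously bounds the difference between $Q(z)$ and the leading term. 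The claims that ``a careful expansion shows the genuine remainder is only $O(z^2)$'' and that the $-\tfrac12 z^2\ln z$ term is ``comfortable slack'' are asserted, not established, and an $O(\cdot)$ statement cannot certify the lemma anyway: the stated bound must hold with the \emph{specific} constants $\frac{4-\pi}{4\pi}$ and $\frac{3\pi-4}{12\pi}$ for \emph{all} $z\ge1$, including $z$ near $1$ where $\ln z$ vanishes and the subleading terms carry all the weight. Since a \emph{lower} bound is needed, the oscillatory contributions in the crossover window $\theta=O(1/z)$ must be controlled in absolute value with explicit constants; deferring this as ``bookkeeping-heavy'' is precisely deferring the content of the lemma.

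For comparison, the paper avoids all stationary-phase and turning-point analysis by an elementary trick you may want to note: differentiate $Q$ using the recurrences $2J_n'=J_{n-1}-J_{n+1}$ and $nJ_n=\tfrac z2(J_{n-1}+J_{n+1})$, so that both $Q'(z)=\tfrac z2\bigl[J_0^2+\sum_{n\ge1}4nJ_n^2\bigr]$ and the derivative of the inner sum, $2z\bigl[J_0^2+J_1^2\bigr]$, arise from telescoping. The entire problem then reduces to the single two-term estimate $J_0^2(z)+J_1^2(z)\ge\frac{2}{\pi z}-\frac{1}{z^2}$ (proved from Watson's expansion with explicit remainder control), and two integrations from $z=1$ reproduce the bound verbatim --- indeed the constants $\frac{4-\pi}{4\pi}$, $\frac{3\pi-4}{12\pi}$ and the $\tfrac12 z^2\ln z$ term are nothing but the integration constants of $\int\!\!\int\bigl(\frac{2}{\pi s}-\frac{1}{s^2}\bigr)$; they are artifacts of that method rather than features of $Q$ itself. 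So your guess about the appendix (region-by-region Debye bounds) is also not what is done. Your integral representation could plausibly yield a sharper bound without the logarithm, but to turn it into a proof you would need to carry explicit constants through the crossover region --- the step your proposal leaves open.
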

\begin{proof}%
Differentiating $Q$ using the recurrence relations
\begin{equation}
\label{eq:rec-relations}
\begin{aligned}
2J_n'(z)&=J_{n-1}(z)-J_{n+1}(z)\\ 
J_n(z)&=\frac{z}{2n}[J_{n-1}(z)+J_{n+1}(z)]
\end{aligned}
\end{equation}
yields
\begin{eqnarray}
\nonumber
Q^\prime(z)&=&
    \frac{z}{2}\sum_{n=1}^\infty n^2 [J^2_{n-1}(z)-J^2_{n+1}(z)]\\
&=&\frac{z}{2}\Bigg[J_0^2(z)+
    \underbrace{\sum_{n=1}^{\infty}4nJ_n^2(z)}_{=:f(z)}\;
    \Bigg]\ .
\label{eq:besbnd:1st-deriv}
\end{eqnarray}
To obtain a lower bound on $f(z)$, we differentiate again and find
\begin{eqnarray}
f'(z) &=&2z\sum_{n=1}^{\infty} [J^2_{n-1}(z)-J^2_{n+1}(z)]\nonumber\\
&=&2z[J_0^2(z)+J_1^2(z)]\ .\label{eq:besbnd:fprime}
\end{eqnarray}
As we prove in Lemma~\ref{lemma:j0-j1-lowerbnd} later on,
$$
J_0^2(z)+J_1^2(z)\ge\frac{2}{\pi z}-\frac{1}{z^2},\qquad z\ge1\ .
$$
Bounding $J_0^2(z)+J_1^2(z)\ge 0$ for $0\le z\le 1$, we obtain by
integration
$$
f(z)\ge \frac{4(z-1)}{\pi}-2\ln z
$$
for $z\ge1$ and $f(z)\ge0$ otherwise. Together with $J_0^2(z)\ge0$, we have
$Q'(z)\ge zf(z)/2$ and by integration, (\ref{eq:bessel-sum-lowerbnd})
follows for $z\ge1$.
\end{proof}
Note that the proof can be extended straightforwardly to sums
$\sum_{n=K}^\infty$.

\begin{lemma}
\label{lemma:besselsum-upperbnd}
For $z\le eK/4$, $K\ge2$,
\begin{equation}
\label{eq:besselsum-upperbnd}
\tilde Q(z)=\sum_{n=K+1}^\infty n^3J_n^2(z)\le
    \frac{Kz^2}{2} \left(\frac{ez}{2K}\right)^{2K}\ .
\end{equation}
\end{lemma}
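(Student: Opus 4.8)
The plan is to start from the elementary power‑series bound on Bessel functions, namely $|J_n(z)|\le (z/2)^n/n!$ for real $z\ge0$ and $n\ge0$, which follows directly from the ascending series $J_n(z)=\sum_{m\ge0}\tfrac{(-1)^m}{m!(m+n)!}(z/2)^{2m+n}$. Squaring gives $n^3J_n^2(z)\le a_n$ with
\[
a_n:=\frac{n^3\,x^n}{(n!)^2}\ ,\qquad x:=\Big(\frac z2\Big)^2\ ,
\]
so that $\tilde Q(z)\le\sum_{n=K+1}^\infty a_n$. Rewriting the target in the same variable, the claimed bound $\tfrac{Kz^2}{2}(ez/2K)^{2K}$ becomes $2e^{2K}x^{K+1}/K^{2K-1}$, so it suffices to control the tail $\sum_{n>K}a_n$.

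The key observation is that this tail is dominated by a rapidly decaying geometric series. The consecutive ratio is
\[
\frac{a_{n+1}}{a_n}=\frac{(n+1)\,x}{n^3}\ ,
\]
and since $n\mapsto (n+1)/n^3$ is decreasing for $n\ge1$, the largest ratio over the range $n\ge K+1$ occurs at $n=K+1$. Setting $r:=(K+2)x/(K+1)^3$, the hypothesis $z\le eK/4$ (i.e.\ $x\le e^2K^2/64$) together with $K\ge2$ forces $r<1$: indeed $r\le e^2(K+2)K^2/(64(K+1)^3)$, which is at most $e^2/64\approx0.12$ in the limit and is even smaller at $K=2$. Consequently $a_n\le a_{K+1}\,r^{\,n-K-1}$ and
\[
\tilde Q(z)\le\sum_{n=K+1}^\infty a_n\le\frac{a_{K+1}}{1-r}\ .
\]

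It then remains to bound the leading term $a_{K+1}=(K+1)^3x^{K+1}/((K+1)!)^2$. Applying the Stirling lower bound $m!\ge\sqrt{2\pi m}\,(m/e)^m$ to $(K+1)!$ yields, after simplification,
\[
\frac{a_{K+1}}{2e^{2K}x^{K+1}/K^{2K-1}}
\le\frac{e^2}{4\pi K}\Big(\frac{K}{K+1}\Big)^{2K}\ ,
\]
so the leading term is already a small fraction of the target (of order $1/(4\pi K)$ for large $K$, since $(K/(K+1))^{2K}\to e^{-2}$), leaving ample room to absorb the geometric factor $1/(1-r)$. Multiplying the two estimates shows $\tilde Q(z)$ stays well below $\tfrac{Kz^2}{2}(ez/2K)^{2K}$ for all $K\ge2$, which proves the Lemma. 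I expect the only delicate point to be bookkeeping of the constants in the tightest regime $K=2$, where Stirling is loosest and $r$ is largest; there one checks the product of the two displayed factors explicitly and confirms it is comfortably below $1$, the abundant slack everywhere else making the general $K$ case immediate.
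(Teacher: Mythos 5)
Your proof is correct and rests on the same key ingredient as the paper's --- the bound $|J_n(z)|\le (z/2)^n/n!$ --- but the way you exploit it is genuinely different. The paper mirrors its proof of Lemma~\ref{lemma:bessel-sum-lowerbnd}: it differentiates $\tilde Q$ using the recurrence relations so that the series telescopes, applies the power-series bound only to the few surviving boundary terms $J_K$, $J_{K+1}$, $J_{K+2}$, integrates back up to get $\tilde Q(z)\le C(K,z)(z/2)^{2K+2}$ with $C(K,z)$ of order $(1/K!)^2$, and finally simplifies via $k!\ge(k/e)^k$. You instead apply the bound termwise to every summand, observe that the resulting series $\sum_{n>K}n^3x^n/(n!)^2$ (with $x=(z/2)^2$) is dominated by a geometric series of ratio $r\le e^2/64<1$ under the hypothesis $z\le eK/4$, and compare the leading term to the target via Stirling. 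Your route avoids any differentiation or integration of infinite series, is purely algebraic, and produces explicit numerical slack (the ratio to the target is about $0.06$ at the tightest point $K=2$, and of order $1/(4\pi K)$ for large $K$); the paper's route keeps the argument structurally parallel to Lemma~\ref{lemma:bessel-sum-lowerbnd} and needs the crude Bessel bound for only three functions. Both deliver the stated inequality, and your constants check out.

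One point you should repair: the claim that $|J_n(z)|\le(z/2)^n/n!$ ``follows directly from the ascending series'' is not valid in the regime where you use it. The series is alternating, but bounding an alternating series by its first term requires the terms to decrease in magnitude from the start, i.e.\ $(z/2)^2\le(m+1)(m+n+1)$ already at $m=0$; for $z$ close to $eK/4$ and $n=K+1$ this fails for large $K$ (e.g.\ $K=100$, $z=68$ gives $(z/2)^2=1156>102$). The inequality itself is nevertheless a standard fact, provable for all real $z$ from Poisson's integral representation $J_n(z)=\frac{(z/2)^n}{\sqrt{\pi}\,\Gamma(n+\frac12)}\int_{-1}^1(1-t^2)^{n-1/2}\cos(zt)\,\mathrm{d}t$ by bounding $|\cos(zt)|\le1$ --- and the paper itself invokes it without proof --- so this is a citation fix rather than a structural gap in your argument.
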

\begin{proof}%
As in the proof on Lemma~\ref{lemma:bessel-sum-lowerbnd}, we differentiate
$\tilde Q$ using the recurrence relations (\ref{eq:rec-relations}) and
find
\[
\tilde Q'(z)=\frac z2\Big[
    (K+1)^2 J_{K}^2(z)+(K+2)^2 J_{K+1}^2(z)+
    \underbrace{\sum_{n=K+2}^\infty 4nJ_n^2(z)}_{=:\tilde f(z)}
\Big]
\]
and
\[
\tilde f'(z)=2z\left[J_{K+1}^2(z)+J_{K+2}^2(z)\right]\ .
\]
We now use the lower bound $|J_n(z)|\le (z/2)^n/n!$, 
which after integration leads to
\[
\tilde Q(z)\le C(K,z)\left(\frac z2\right)^{2K+2}
\]
with a prefactor $C(K,z)$ of order $(1/K!)^2$,
which for $z\le eK/4$,
$K\ge2$, and using
$k!\ge\int_k^\infty e^{-t}t^{k}\mathrm{d}t\ge(k/e)^k$,
straighforwardly reduces to (\ref{eq:besselsum-upperbnd}).
\end{proof}

\begin{lemma} For $z\ge1$,
\label{lemma:j0-j1-lowerbnd}
\[J_0^2(z)+J_1^2(z)\ge \frac{2}{\pi z}-\frac{1}{z^2}\ .
\]
\end{lemma}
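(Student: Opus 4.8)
The plan is to reduce the inequality to a single auxiliary estimate by exploiting two exact first-order identities that follow from the recurrence relations (\ref{eq:rec-relations}). Write $M(z)=J_0^2(z)+J_1^2(z)$ and $N(z)=zM(z)$; since the claim is equivalent to $N(z)\ge\frac2\pi-\frac1z$, it suffices to control $N$. From (\ref{eq:rec-relations}) (together with $J_{-1}=-J_1$) one reads off $J_0'=-J_1$ and $J_1'=J_0-J_1/z$, and a direct differentiation then gives
\[
\frac{d}{dz}\big[z(J_0^2+J_1^2)\big]=J_0^2-J_1^2\ ,
\qquad
\frac{d}{dz}\big[zJ_0J_1\big]=z(J_0^2-J_1^2)\ .
\]

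First I would integrate these from $z$ to $\infty$. Using the decay of the Bessel functions together with the standard limits $J_0(s)J_1(s)\to0$ and $sM(s)\to\frac2\pi$ as $s\to\infty$, the first identity gives $\int_z^\infty(J_0^2-J_1^2)\,ds=\frac2\pi-N(z)$, while inserting the second identity and integrating by parts (the boundary term being $-J_0(z)J_1(z)$) yields $\int_z^\infty(J_0^2-J_1^2)\,ds=-J_0(z)J_1(z)+\int_z^\infty\frac{J_0(s)J_1(s)}{s}\,ds$. Equating the two representations produces the exact identity
\[
N(z)=\frac2\pi+J_0(z)J_1(z)-\int_z^\infty\frac{J_0(s)J_1(s)}{s}\,ds\ ,
\]
where the tail integral converges absolutely because $2|J_0J_1|\le M$ decays like $s^{-1}$.

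The lemma then follows from the single auxiliary bound $M(z)\le\frac1z$ for $z\ge1$. Granting it, the elementary inequality $2|J_0J_1|\le J_0^2+J_1^2$ gives $|J_0(z)J_1(z)|\le\tfrac12 M(z)\le\frac1{2z}$, and for the tail
\[
\left|\int_z^\infty\frac{J_0(s)J_1(s)}{s}\,ds\right|
\le\int_z^\infty\frac{M(s)}{2s}\,ds
\le\int_z^\infty\frac{1}{2s^2}\,ds=\frac1{2z}\ .
\]
Substituting both estimates into the identity yields $N(z)\ge\frac2\pi-\frac1{2z}-\frac1{2z}=\frac2\pi-\frac1z$, which upon dividing by $z$ is exactly the asserted bound $J_0^2(z)+J_1^2(z)\ge\frac2{\pi z}-\frac1{z^2}$.

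The hard part is establishing the auxiliary bound $z(J_0^2(z)+J_1^2(z))\le1$ on $z\ge1$. A convenient starting point is that $M$ is monotonically decreasing, since $M'(z)=-\frac2zJ_1^2(z)\le0$. For large argument the Hankel asymptotics give $M(z)=\frac2{\pi z}\big(1+O(z^{-2})\big)$, so $M(z)<\frac1z$ holds with room to spare once $z\ge z_0$ for an explicit threshold $z_0$ obtained from the asymptotic expansion with its remainder bound; on the compact range $1\le z\le z_0$ the bound can be verified directly using the monotonicity of $M$ together with explicit estimates on $J_0$ and $J_1$. The generous gap between the target constant $1$ and the true leading value $2/\pi\approx0.637$ leaves ample slack, but controlling the Bessel functions quantitatively through the transition region $z\approx1$ is where the actual effort is concentrated.
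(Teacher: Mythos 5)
Your reduction machinery is correct, and it is a genuinely different route from the paper's: I checked the two derivative identities (they follow from $J_0'=-J_1$ and $J_1'=J_0-J_1/z$), the integration by parts with the boundary terms you state, and the resulting exact representation
\[
z\bigl(J_0^2(z)+J_1^2(z)\bigr)=\frac{2}{\pi}+J_0(z)J_1(z)-\int_z^\infty\frac{J_0(s)J_1(s)}{s}\,\mathrm{d}s\ ,
\]
and, granting your auxiliary bound $z\bigl(J_0^2(z)+J_1^2(z)\bigr)\le1$ for $z\ge1$, the final assembly does give the lemma. The paper instead works directly with Hankel's asymptotic expansion (Watson, Sec.~7.3) truncated at $p=q=1$, using the fact that the remainders are bounded by the first neglected terms, and lower-bounds $J_0^2+J_1^2$ term by term.

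However, there is a genuine gap: the auxiliary bound $z\bigl(J_0^2(z)+J_1^2(z)\bigr)\le1$ on $z\ge1$, on which your whole argument rests (it is used both for $|J_0(z)J_1(z)|\le1/(2z)$ and for the absolute convergence and size of the tail integral), is never actually proven --- it is only sketched. The sketch defers to exactly the kind of quantitative work that constitutes the paper's proof: to produce an explicit threshold $z_0$ you need the Hankel expansion \emph{with explicit remainder bounds} (the same tool the paper uses), and on the compact range $[1,z_0]$ your ``direct verification'' via monotonicity of $M$ requires certified values of $J_0$ and $J_1$ at a chain of points $a_1=1<a_2<\dots$ satisfying $M(a_i)\le1/a_{i+1}$, each value needing rigorous error control (e.g., alternating-series truncation bounds for the power series). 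No threshold, no remainder estimate, and no explicit evaluations are given. The identity is an elegant reformulation --- it converts the lower bound into an upper bound with comfortable slack ($2/\pi\approx0.64$ against $1$) --- but it does not eliminate the quantitative control of $J_0,J_1$ through the region $z\approx1$; it only relocates it, and as submitted the proof is incomplete at precisely the point where all the analytic difficulty lives.
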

Similar bounds can be proven for any pair $J_n^2+J_{n+1}^2$
along the same lines. Note that the asymptotic scaling
$J_n^2(z)+J_{n+1}^2(z)\sim2/\pi z$ is well known~\cite{watson:besselbook}.
\begin{proof}%
We use the expansion (\cite{watson:besselbook}, Sec.\ 7.3)
$$
J_n(z)=\displaystyle\sqrt{\frac{2}{\pi z}}
    \left[\cos(z-\frac{n\pi}2-\frac\pi4\displaystyle)
    \left(\sum_{m=0}^{p-1}
    \frac{(-1)^m(n,2m)}{(2z)^{2m}}+P(z)\right)\right.
$$
$$
    \qquad\qquad-\left.
    \sin(z-\frac{n\pi}2-\frac\pi4\displaystyle)
    \left(\sum_{m=0}^{q-1} \frac{(-1)^m(n,2m+1)}{(2z)^{2m+1}}+Q(z)
    \right)\right]
$$
with
$$
(n,m):=\frac{\Gamma(n+m+1/2)}{m!\Gamma(n-m+1/2)}\ ,
$$
where the remainders $P$ and $Q$ lie between zero and the first neglected
term of the respective series, given that $z>0$, $2p>n-\tfrac12$, and $2q\ge
n-\tfrac32$. With $J_n(z)=S_n(z)+E_n(z)$,
where $S_n(z)$ comes from the series and
$E_n(z)$ from the remainders, we have $|J_n(z)|\ge
|S_n(z)|-|E_n(z)|$ and thus
\begin{equation}
J_n^2(z)\ge S_n(z)^2-2|S_n(z)||E_n(z)|\ .
\label{eq:butterstollen}
\end{equation}
We choose $p=q=1$, and first upper bound
$|S_n(z)|\le\frac{2}{\sqrt{\pi z}}$  ($n=0,1$)
for $z\ge\tfrac38$ and then, using (\ref{eq:butterstollen}),
\[
J_0^2(z)+J_1^2(z)\ge\frac{2}{\pi z}-\frac{1}{\pi z^2}-
    \frac{3}{2\sqrt{2}\pi z^3}+\frac{1}{32\pi z^3}-
    \frac{45}{32\sqrt{2}\pi z^4}\ge
    \frac{2}{\pi z}-\frac{1}{z^2}
\]
for $z\ge1$, using $|\sin(z)|\le1$, $|\cos(z)|\le1$.
\end{proof}

\end{document}